\DeclareSymbolFont{rsfscript}{OMS}{rsfs}{m}{n}
\DeclareSymbolFontAlphabet{\mathrsfs}{rsfscript}
\newcommand{\sw}{reset word}
\newcommand{\sws}{reset words}
\newcommand{\sa}{synchronizing automata}
\newcommand{\san}{synchronizing automaton}
\newtheorem{conj}{Conjecture}
\begin{document}

\title{Synchronizing Automata on \\Quasi Eulerian Digraph}

\titlerunning{Synchronizing Automata on \\Quasi Eulerian Digraph}

\author{Mikhail V. Berlinkov}

\authorrunning{M. V. Berlinkov}

\tocauthor{M.V.Berlinkov (Ekaterinburg, Russia)}

\institute{Laboratory of Combinatorics, \\Institute of Mathematics
and Computer Science,\\Ural Federal University,\\
620083 Ekaterinburg, Russia\\
\email{m.berlinkov@gmail.com}}

\maketitle

\begin{abstract}
In 1964 \v{C}ern\'{y} conjectured that each $n$-state synchronizing
automaton posesses a reset word of length at most $(n-1)^2$. From
the other side the best known upper bound on the reset length
(minimum length of reset words) is cubic in $n$. Thus the main
problem here is to prove quadratic (in $n$) upper bounds. Since
1964, this problem has been solved for few special classes of \sa.
One of this result is due to Kari~\cite{Ka03} for automata with Eulerian digraphs.
In this paper we introduce a new approach to prove quadratic
upper bounds and explain it in terms of Markov chains and Perron-Frobenius theories.
Using this approach we obtain a quadratic upper bound for a generalization of Eulerian automata.
\end{abstract}

\section{Synchronizing automata and the \v{C}ern\'y conjecture}

Suppose $\mathrsfs{A}$ is a complete deterministic finite automaton
whose input alphabet is $\Sigma$ and whose state set is $Q$. The
automaton $\mathrsfs{A}$ is called \emph{synchronizing} if there
exists a word $w\in\Sigma^*$ whose action \emph{resets}
$\mathrsfs{A}$, that is, $w$ leaves the automaton in one particular
state no matter at which state in $Q$ it is applied: $q.w=q'.w$ for
all $q,q'\in Q$. Any such word $w$ is called \emph{reset} (or
\emph{synchronizing}) for the automaton. The minimum length of reset
words is called \emph{reset length} and can be denoted by
$\mathfrak{C}(\mathrsfs{A})$.

Synchronizing automata serve as transparent and natural models of
error-resistant systems in many applications (coding theory,
robotics, testing of reactive systems) and also reveal interesting
connections with symbolic dynamics and other parts of mathematics.
For a brief introduction to the theory of \sa\ we refer the reader
to the recent survey~\cite{Vo08}. Here we discuss one of the main
problems in this theory: proving an upper bound of magnitude
$O(n^2)$ for the minimum length of reset words for $n$-state \sa.

In~1964 \v{C}ern\'{y}~\cite{Ce64} constructed for each $n>1$ a \san\
$\mathrsfs{C}_n$ with $n$ states whose shortest \sw\ has length
$(n-1)^2$, i.e. $\mathfrak{C}(\mathrsfs{C}_n)=(n-1)^2$.
The automaton $\mathrsfs{C}_4$ is drawn on figure~\ref{fig:4rcp}.
Soon after that he conjectured that those automata represent the worst possible
case, thus formulating the following hypothesis:
\begin{conj}[\v{C}ern\'y]
\label{Cerny_Conj} Each \san\ $\mathrsfs{A}$ with $n$ states has a
\sw\ of length at most \makebox{$(n-1)^2$}, i.e.
$\mathfrak{C}(\mathrsfs{A}) \leq (n-1)^2$.
\end{conj}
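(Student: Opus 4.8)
The plan is to work inside the classical \emph{extension} (greedy) framework and to try to compress the known cubic estimate down to the quadratic one demanded by the conjecture. Fix a \san\ $\mathrsfs{A}$ with state set $Q$, $|Q|=n$. I would track the image of the whole state set under longer and longer words: one produces letters collected into blocks $w_1,w_2,\dots$ giving a strictly decreasing chain of images $S_0=Q \supsetneq S_1 \supsetneq S_2 \supsetneq \dots$, where $S_{i+1}=S_i.w_{i+1}$, until a singleton is reached. Since the chain has at most $n-1$ proper descents, the reset length is $\mathfrak{C}(\mathrsfs{A})\le\sum_i|w_i|$, so everything reduces to (i) bounding the cost $|w_i|$ of a single ``compression'' and (ii) controlling how these costs accumulate over the at most $n-1$ steps.

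For step (i) the standard device is the \emph{pair automaton}, whose states are the $\binom{n}{2}$ two-element subsets together with the singletons, with transitions inherited letterwise from $\mathrsfs{A}$. Synchronizability of $\mathrsfs{A}$ is equivalent to every pair being able to reach a singleton in this automaton, so a breadth-first argument gives, for each pair $\{p,q\}$, a merging word of length at most $\binom{n}{2}$. To compress a current image $S$ of size at least $2$ one picks a pair inside $S$, merges it, and thereby drops the image size by at least one; hence each $|w_i|\le\binom{n}{2}$, and summing over the $n-1$ steps yields the familiar $\mathfrak{C}(\mathrsfs{A})\le(n-1)\binom{n}{2}=O(n^3)$ bound. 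The entire difficulty of the conjecture lives in the loss incurred here: we have charged the \emph{worst} pair-merging cost at every one of the $\sim n$ steps, whereas the target $(n-1)^2$ permits on average only a bounded amount of work per state removed.

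My strategy for closing this gap is to replace the crude ``one pair at a time'' accounting by an \emph{amortized} argument driven by a weighting of $Q$. Concretely, I would attach to the automaton a probability vector $\pi$ on $Q$ -- the natural candidate being a stationary, Perron--Frobenius-type vector of a suitable averaged transition matrix, exactly the object exploited elsewhere in this paper -- and define a potential $\Phi(S)=\sum_{q\in S}\pi(q)$ or a convex function of it. The aim is to show that a single well-chosen letter, or a short word, always decreases $\Phi$ by a definite fraction of its current surplus above the singleton value, so that the number of letters needed to drive $\Phi$ down to that of a singleton is $O(n^2)$ rather than $O(n^3)$. The role of the weighting is to let a few expensive merges be paid for by many cheap ones, turning a worst-case-per-step bound into a budget shared across all steps.

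The main obstacle -- and the reason the statement remains a conjecture -- is precisely this uniform decrease of $\Phi$: for an \emph{arbitrary} \san\ there is no known letter guaranteeing a proportional drop of any such potential, since the action of a letter can be highly non-uniform and can even raise the weight of the image under a na\"ively chosen $\pi$. Controlling the interaction between the combinatorial choice of which pair to merge next and the analytic estimate of how much $\Phi$ actually falls is exactly where every general attempt has stalled. Accordingly, I expect the averaging argument to go through only when the underlying digraph is sufficiently balanced, so that the Perron vector is essentially uniform and the per-letter decrease becomes provable; this is what motivates restricting, in the remainder of the paper, to a quasi-Eulerian class generalizing Kari's Eulerian automata, and it is why the full statement above must for now be left open.
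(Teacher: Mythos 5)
You have not proved the statement, and you are right not to claim otherwise: the statement is the \v{C}ern\'y conjecture itself, which this paper (like the rest of the literature) leaves open and never attempts to prove. The paper states it purely as a conjecture, motivated by the series $\mathrsfs{C}_n$ with $\mathfrak{C}(\mathrsfs{C}_n)=(n-1)^2$, so there is no ``paper's own proof'' to compare against. Within that constraint, your write-up is an honest and accurate map of the terrain: the chain-of-images accounting with at most $n-1$ proper descents, the pair-automaton merging giving $|w_i|\le\binom{n}{2}$ and hence a cubic bound, and the observation that the whole difficulty is amortizing the per-step cost down from $O(n^2)$ to $O(n)$ on average.

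The genuine gap is exactly the one you name: the claimed ``definite fractional decrease'' of the potential $\Phi(S)=\sum_{q\in S}\pi(q)$ under a short word is unproven, and no choice of $\pi$ is known to make it hold for arbitrary \sa. Indeed the paper itself records the obstruction (citing the author's own DLT~2010 result) that extension steps relative to the uniform vector $1_n$ cannot in general be bounded by $cn$ for any $c<2$, which is why a quadratic bound cannot be forced through the naive extension method. What the paper actually does with the stationary vector $\alpha=\alpha(S(\mathrsfs{A},p))$ is weaker than what your plan needs: Theorem~\ref{th_extension} guarantees an extension word of length at most $n-1$ that makes $(v^t x,\alpha)$ \emph{strictly} positive, but the increment can be arbitrarily small (smaller than $\frac{1}{n}$), so the number of steps --- not the step length --- becomes the unbounded quantity. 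The paper only controls the step count when $\alpha$ has special structure (rational with small common denominator, as in Corollary~\ref{cor_max_denom}, or with $n-c$ equal entries, as in Proposition~\ref{prop_diff_values}), which is precisely the Eulerian, pseudo-Eulerian and quasi-Eulerian regime you correctly predict as the provable case; even there the resulting bounds, e.g.\ $2^c(n-c+1)(n-1)$, exceed $(n-1)^2$ in general. So your proposal should be read as a correct diagnosis of why the statement must remain a conjecture, matching the paper's actual strategy for its restricted results, rather than as a proof of the statement.
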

By now this simply looking conjecture is arguably the most
longstanding open problem in the combinatorial theory of finite
automata. Moreover, the best upper bound known so far is due to
Pin~\cite{Pi83}\footnote{An upper bound of order
$\Omega(\frac{7n^3}{48})$ has been proved in \cite{TR_7_48}. But we
know about one unclear place in the proof of this result.} (it is based
upon a combinatorial theorem conjectured by Pin and then proved by
Frankl~\cite{Fr82}): for each \san\ with $n$ states, there exists a
\sw\ of length $\frac{n^3-n}6$. Since this bound is cubic and the
\v{C}ern\'y conjecture claims a quadratic value, it is of certain
importance to prove quadratic (upper) bounds for some classes of
\sa.

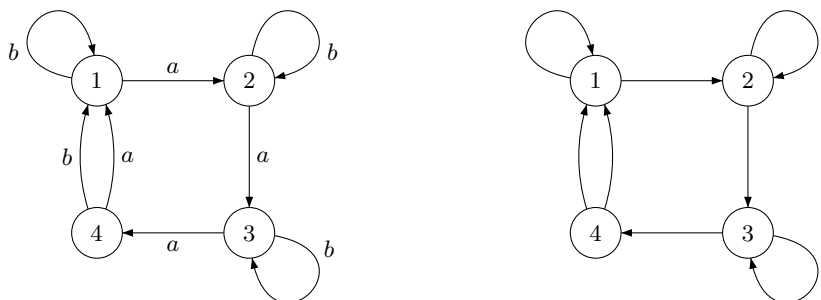
\begin{figure}[ht]
 \begin{center}
  \unitlength=3.2pt
 \begin{picture}(18,26)(20,-2)
    \gasset{Nw=6,Nh=6,Nmr=3}
    \thinlines
    \node(A1)(0,18){$1$}
    \node(A2)(18,18){$2$}
    \node(A3)(18,0){$3$}
    \node(A4)(0,0){$4$}
    \drawloop[loopangle=135,ELpos=30](A1){$b$}
    \drawloop[loopangle=45,ELpos=70](A2){$b$}
    \drawloop[loopangle=-45,ELpos=25](A3){$b$}
    \drawedge(A1,A2){$a$}
    \drawedge(A2,A3){$a$}
    \drawedge(A3,A4){$a$}
    \drawedge[curvedepth=2](A4,A1){$b$}
    \drawedge[ELside=r, curvedepth=-2](A4,A1){$a$}
 \end{picture}
 \begin{picture}(18,26)(-20,-2)
    \gasset{Nw=6,Nh=6,Nmr=3}
    \thinlines
    \node(A1)(0,18){$1$}
    \node(A2)(18,18){$2$}
    \node(A3)(18,0){$3$}
    \node(A4)(0,0){$4$}
    \drawloop[loopangle=135](A1){}
    \drawloop[loopangle=45](A2){}
    \drawloop[loopangle=-45](A3){}
    \drawedge(A1,A2){}
    \drawedge(A2,A3){}
    \drawedge(A3,A4){}
    \drawedge[curvedepth=2](A4,A1){}
    \drawedge[ELside=r, curvedepth=-2](A4,A1){}
 \end{picture}
 \end{center}
 \caption{Automaton $\mathrsfs{C}_4$ and its underlying graph}
 \label{fig:4rcp}
\end{figure}

\section{Exponents of primitive matrices vs reset thresholds}

In the rest of the paper, we assume that $\mathrsfs{A}$ is a
synchronizing $n$-state automaton with $k$-letter input alphabet
$\Sigma=\{a_1,a_2, \dots ,a_k \}$ and whose state set is $Q$. We
also assume $n>1,k>1$ and $\mathrsfs{A}$ is strongly connected because finding
\sws\ of length $O(n^2)$ can be easily reduced to this case (see
\cite{Pi78} for example). Now let us consider relations between
primitive matrices and synchronizing automata. In order to do this
we determine a natural linear structure associated with automata. We
mean the states of $\mathrsfs{A}$ as numbers $1,2,\dots,n$ and then
assign to each subset $T\subseteq Q$ its \emph{characteristic
vector} $[T]$ in the linear space $\mathbb{R}^n$ defined as follows:
the $i$-th entry of $[T]$ is 1 if $i\in T$, otherwise it is equal to
0. As usually, for any two vectors $g_1,g_2 \in \mathbb{R}^n$ we
denote the inner product of these vectors by $(g_1,g_2)$.

A matrix $M$ is \emph{primitive} if it is non-negative and its
$m$-th power is positive for some natural number $m$. The minimum
number $m$ with this property is called an \emph{exponent} of the
matrix $M$ and can be denoted by $exp(M)$. Let us also define a
\emph{weak exponent} of matrix $M$ as a minimum number $m$ such that
$M^m$ has a positive row and denote it by $wexp(M)$. Note that a
(weak) exponent depends only on the set of indices with positive
elements $Supp(M) = \{(i,j) \mid M_{i,j}>0 \}$ and doesn't depend on
their values. So when we consider exponents of some matrix $M$ we
can assume it is a 1-0 matrix and there is a \emph{corresponding}
graph with the adjacency matrix $M$. Moreover, $M^t_{i,j}$ is equal to the
number of directed paths of length exactly $t$ from state $i$ to $j$ in the corresponding graph.
The following proposition shows the basic properties of primitive matrices.

\begin{proposition}\label{prop_exp}Let $M$ be $n \times n$ primitive matrix. Then
\begin{enumerate}
    \item \label{i3} $wexp(M) \leq exp(M) \leq wexp(M)+n-1$;
    \item \label{i4} $exp(M) \leq (n-1)^2 + 1$ and equality holds only for Wielandt
    matrices;
    \item \label{i5} $wexp(M) \leq (n-1)^2$.
\end{enumerate}
\end{proposition}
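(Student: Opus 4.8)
The plan is to read everything through the corresponding digraph $G$ on vertex set $\{1,\dots,n\}$, using that $M^t_{i,j}$ counts walks of length exactly $t$ from $i$ to $j$; primitivity forces $G$ to be strongly connected (so every vertex has a predecessor) with the $\gcd$ of its cycle lengths equal to $1$. I would first record the elementary monotonicity fact that if row $i$ of $M^t$ is positive then so is row $i$ of $M^{t+1}$: indeed $M^{t+1}_{i,k}=\sum_l M^t_{i,l}M_{l,k}$, and since $k$ has a predecessor $l_0$ one term $M^t_{i,l_0}M_{l_0,k}$ is positive. Part~\ref{i3} then splits into two easy halves. The inequality $wexp(M)\le exp(M)$ is immediate, since a positive matrix has a positive row. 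For $exp(M)\le wexp(M)+n-1$, put $w=wexp(M)$ and let row $i$ of $M^w$ be positive; iterating the monotonicity fact (or, directly, walking $s$ steps backwards from any $k$ to find $l$ with $M^s_{l,k}>0$ and using $M^w_{i,l}>0$) shows that row $i$ of $M^{w+s}$ is positive for every $s\ge 0$. Now for arbitrary $j,k$ pick a shortest path $j\to i$, of length $d\le n-1$ by strong connectivity, and bound $M^{w+n-1}_{j,k}\ge M^{d}_{j,i}\,M^{\,w+n-1-d}_{i,k}>0$; hence $M^{w+n-1}>0$.

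The genuine obstacle is Part~\ref{i4}, which is exactly Wielandt's classical bound, and I would either invoke it directly or reprove it along the standard lines. Let $s$ be the length of a shortest cycle of $G$. Since $G$ is primitive it cannot be a single $n$-cycle (that would have period $n$), so $s\le n-1$. The key quantitative input is number-theoretic: fixing a shortest cycle $C$ and noting that closed walks based at a vertex of $C$ realise every sufficiently large length (the relevant Frobenius/Schur number for the cycle lengths, which have $\gcd 1$), one proves the intermediate bound $exp(M)\le n+s(n-2)$. As $s\le n-1$ this is at most $n+(n-1)(n-2)=(n-1)^2+1$, giving the stated bound. The delicate part is the equality case: tracing through when all these estimates are simultaneously tight forces $s=n-1$, a unique long ``tail'', and exactly the two coprime cycle lengths $n$ and $n-1$, i.e. the Wielandt digraph $W_n$. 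This bookkeeping is where I expect the real work to lie.

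Finally Part~\ref{i5} follows by combining the previous two. If $M$ is not a Wielandt matrix then Part~\ref{i4} gives $exp(M)\le(n-1)^2$, whence $wexp(M)\le exp(M)\le(n-1)^2$ by Part~\ref{i3}. It then remains to check the single extremal graph $W_n$ by hand, and here the point is that a cleverly chosen row becomes positive strictly before the full matrix does. I would start the walks at the vertex $n$, which lies on both the $n$-cycle and the $(n-1)$-cycle. Closed walks based at $n$ have precisely the lengths $an+b(n-1)$ with $a,b\ge 0$, and since $\gcd(n,n-1)=1$ every integer at least $(n-1)(n-2)$ occurs. Because the shortest path from $n$ to any target has length at most $n-2$, prepending a suitable closed walk shows that row $n$ of $W_n^{m}$ is positive for every $m\ge (n-2)+(n-1)(n-2)=(n-1)^2-1$. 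Thus $wexp(W_n)\le(n-1)^2-1<(n-1)^2$, completing the proof. This is consistent with Part~\ref{i4}, since the last entry of $W_n^{m}$ to turn positive is $(1,1)$, which remains zero up to $m=(n-1)^2$; row $n$ becoming positive earlier is exactly what lets $wexp$ beat $exp$ in the extremal case.
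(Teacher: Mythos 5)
Your proof is correct (for $n\ge 3$) and follows the same overall route as the paper: the same path-splicing argument for $wexp(M)\le exp(M)\le wexp(M)+n-1$ (your monotonicity lemma via predecessors plays exactly the role of the paper's trick of first walking $n-1-d$ steps forward before using the positive row), an appeal to Wielandt's theorem for the second item, and the same case split (Wielandt versus non-Wielandt) for the third. The one place you genuinely add something is the Wielandt case of the third item: the paper simply asserts $wexp(W)=n^2-3n+3\le (n-1)^2$ with no justification, whereas you prove a weaker but sufficient bound $wexp(W_n)\le (n-1)^2-1$ by a complete argument --- closed walks based at the branch vertex realize every length at least $(n-1)(n-2)$ because the Frobenius number of the coprime pair $(n,n-1)$ is $n^2-3n+1$, and every vertex is within distance $n-2$ of the branch vertex. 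So your write-up is actually more self-contained than the paper's on this point. The only blemish is the degenerate case $n=2$: there $W_2$ is the $2$-cycle with a loop, the branch vertex is at distance $1>n-2$ from the other vertex, and in fact $wexp(W_2)=1=(n-1)^2$, so your strict inequality $wexp(W_n)<(n-1)^2$ fails; the proposition itself is unaffected since only $wexp\le (n-1)^2$ is claimed, but you should either restrict your sharper claim to $n\ge 3$ or check $n=2$ separately.
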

\begin{proof}
    The left part of the inequality in item~\label{i3} follows immediately from the
definitions. The right part can be easily proved in terms of graph theory.
Indeed, let $i$ be the index of a positive row in $M^{wexp(M)}$. Then
there is a path (in the corresponding graph) of length $d(i,t) \leq
n-1$ from the state $i$ to each state $t$. Further, for each state
$s$ there is a path to some state $q$ of length $n-1-d(i,t)$ and a
path of length $wexp(M)$ from $q$ to $i$. Thus there is a path $s
\rightarrow q \rightarrow i \rightarrow t$ of length $$n-1-d(i,t) +
wexp(M) + d(i,t)=wexp(M) + n-1$$ and the item is proved.
\par
Item~\ref{i4} has been proven by Wielandt~\cite{WIL}. Item~\ref{i5}
follows from the fact that $exp(W)=(n-1)^2+1$ only for
the Wielandt matrix $W$ (item~\ref{i4}) but $wexp(W) = n^2-3n+3 \le
(n-1)^2$.
\end{proof}

The proof of the following proposition can be found in \cite{VAG} but we introduce it here to
be self-contained.

\begin{proposition}\label{prop_prim}Let $UG(\mathrsfs{A})$ denotes the underlying graph of the automaton
$\mathrsfs{A}$ and $M=M(UG(\mathrsfs{A}))$ denotes its adjacency matrix. Then
\begin{enumerate}
    \item \label{i0} $M = \sum_{i=1}^{k}{[a_i]}$;
    \item \label{i1} $M$ is a primitive matrix;
    \item \label{i2} $wexp(M) \leq \mathfrak{C}(\mathrsfs{A})$.
\end{enumerate}
\end{proposition}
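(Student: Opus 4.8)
The plan is to treat the three items in order: the first is a direct translation of definitions, the third a one-line consequence of the path-counting identity stated just above, and the middle item carries the only genuine content. For item~\ref{i0} I would simply unwind the definitions. Each letter $a_i$ is a total map on $Q$, so $[a_i]$ is the $0$--$1$ matrix with $[a_i]_{q,q'}=1$ exactly when $q.a_i=q'$, and the number of edges of $UG(\mathrsfs{A})$ from $q$ to $q'$ is by definition the number of letters sending $q$ to $q'$. Hence $M_{q,q'}=|\{\,i\mid q.a_i=q'\,\}|=\sum_{i=1}^{k}[a_i]_{q,q'}$, which read entrywise is the asserted identity $M=\sum_{i=1}^{k}[a_i]$.

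Item~\ref{i1} is the heart of the proposition, and I would establish it in two stages. Strong connectivity of $\mathrsfs{A}$ is exactly irreducibility of the non-negative matrix $M$, so by Perron--Frobenius it remains only to prove that $M$ is aperiodic, i.e. that the greatest common divisor of all cycle lengths in $UG(\mathrsfs{A})$ equals $1$. I expect this aperiodicity step to be the main obstacle, since it is the sole place where the synchronization hypothesis is truly needed. To carry it out, suppose the period were $d>1$; then the states decompose into nonempty cyclic classes $C_0,\dots,C_{d-1}$ with every edge running from some $C_t$ into $C_{t+1\bmod d}$. Because each edge comes from a total letter, each letter maps $C_t$ into $C_{t+1\bmod d}$, so a word of length $\ell$ maps $C_t$ into $C_{t+\ell\bmod d}$. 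A reset word $w$, however, collapses all of $Q$ onto a single state lying in one class, while $C_0.w\subseteq C_{\ell\bmod d}$ and $C_1.w\subseteq C_{\ell+1\bmod d}$ lie in distinct classes because $d>1$; this contradiction forces $d=1$, and irreducible together with aperiodic means primitive.

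For item~\ref{i2} I would exploit the factorization $M^{\ell}=\bigl(\sum_{i=1}^{k}[a_i]\bigr)^{\ell}=\sum_{|u|=\ell}[u]$ coming from item~\ref{i0}, where the sum ranges over all words $u$ of length $\ell$ and $[u]$ is the transformation matrix of $u$. Take $\ell=\mathfrak{C}(\mathrsfs{A})$ and a shortest reset word $w$, and let $p=Q.w$ be the state onto which $w$ collapses $Q$. Then $[w]$ is a $0$--$1$ matrix all of whose $1$'s sit in the line indexed by $p$, every state possessing the length-$\ell$ path that $w$ traces into $p$; thus this line of $[w]$, and hence of the entrywise-larger matrix $M^{\ell}$, is strictly positive. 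By the definition of the weak exponent this yields $wexp(M)\le\mathfrak{C}(\mathrsfs{A})$. The only point to watch is the orientation convention: a reset word witnesses the line of $M^{\mathfrak{C}(\mathrsfs{A})}$ at the common synchronizing state, and one must read the weak exponent along exactly that line. Combined with item~\ref{i5} of Proposition~\ref{prop_exp}, this produces the quadratic lower estimate on the reset length that this section aims at.
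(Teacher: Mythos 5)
Your proof is correct, but on item~\ref{i1} it takes a genuinely different route from the paper. The paper handles the items in the order \ref{i0}, \ref{i2}, \ref{i1}: item~\ref{i0} is declared immediate from definitions, item~\ref{i2} is proved exactly as you do (a reset word $w$ of length $\mathfrak{C}(\mathrsfs{A})$ takes every state to a single state $i$, so the line of $M^{|w|}$ indexed by $i$ is positive), and primitivity is then deduced in a single sentence from what was just shown: once some power of $M$ has a positive line, strong connectivity lets one pad paths --- as in the proof of the first item of Proposition~\ref{prop_exp} --- so that a full power of $M$ is positive; no analysis of the period is needed. You instead prove item~\ref{i1} head-on via the Perron--Frobenius characterization: irreducibility is strong connectivity, and aperiodicity follows from your cyclic-class argument, where a reset word of length $\ell$ would send the nonempty classes $C_0$ and $C_1$ into the distinct classes $C_{\ell\bmod d}$ and $C_{\ell+1\bmod d}$ while collapsing both to one state --- a contradiction. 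Your version is self-contained and makes explicit exactly where the synchronization hypothesis enters, which the paper's terse ``since $\mathrsfs{A}$ is strongly connected'' hides; the paper's version is shorter because it recycles item~\ref{i2} and Proposition~\ref{prop_exp} instead of invoking the period decomposition. Your care about the row/column orientation in item~\ref{i2} is also warranted and consistent with the paper's convention $[K.v]=[v][K]$. One small slip in your closing sentence: combining item~\ref{i2} with item~\ref{i5} of Proposition~\ref{prop_exp} does not yield a ``quadratic lower estimate on the reset length''; it yields that the weak exponent of the underlying graph, which by item~\ref{i2} is a \emph{lower} bound on $\mathfrak{C}(\mathrsfs{A})$, is itself at most $(n-1)^2$ --- the consistency statement with the \v{C}ern\'y conjecture that this section is after. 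This does not affect the proof of the proposition itself.
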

\begin{proof}
Item~\ref{i0} follows immediately from definitions. Since $\mathrsfs{A}$ is a \san\
there exists a reset word $w$ of length $\mathfrak{C}(\mathrsfs{A})$ which takes all the states of the
automaton to some state $i$. This means that $i$-th row of $M^{|w|}$
is positive, so item~\ref{i2} is proved. Since $\mathrsfs{A}$ is strongly connected item~\ref{i1} is true also.
\end{proof}

It follows from above propositions that weak exponent of the underlying graph of $\mathrsfs{A}$
is at most $(n-1)^2$. This means that there are (unlabeled) paths of equal length $l \leq (n-1)^2$ from every state
of the automaton $\mathrsfs{A}$ into some particular state. The \v{C}ern\'y conjecture asserts
additionally that such paths can be chosen to be labeled by some fixed (reset) word. It seems that this additional
demand should increase significantly the minimum length of such paths. Indeed, for a lot of \sa\ the
reset length is much more than the weak exponent of its underlying graph. For instance, if \sa\ contains a loop
then its weak exponent is at most $n-1$ but its reset length can be equal $(n-1)^2$ (for \v{C}ern\'{y} series).
However, in order to prove the \v{C}ern\'{y} conjecture we only need such bound in the worst case and in \cite{VAG}
a strong connection between distribution of reset lengths of \sa\ and exponents of primitive graphs is considered.

\section{Markov chains and an extension method}

The aim of this paper is to obtain upper bounds on reset lengths by utilizing its connection with exponents of primitive graphs.
Let we have probability vector $p \in R_{+}^{k}$ on $\Sigma$ naturally extended on words as $p(v) = \prod_{i=1}^{|v|}{p(v(i))}$.
Now consider a random process of walking some agent in the underlying graph
$G = UG(\mathrsfs{A})$ walking by arrow labeled by $a_i$ with probability $p(a_i)$.
Then the matrix $S(\mathrsfs{A},p)=\sum_{i=1}^{k}{p(a_i)*a_i}$ is a probability matrix of this Markov process.
Let us note that $Supp(S(\mathrsfs{A},p))=Supp(M(UG(\mathrsfs{A})))$ and $S(\mathrsfs{A},p)$ is also column stochastic.
To simplify our notations denote by $1_n$ a vector in $R^n$ with all components equals $\frac{1}{n}$.
The following proposition summarize properties of Markov chains that we need.
\begin{proposition}\label{prop_markov}Let $S$ be a column stochastic $n \times n$ primitive matrix of some Markov process. Then
\begin{enumerate}
    \item \label{m0} $1_n$ is a left eigenvector of $S$, i.e. $S^t 1_n = 1_n$;
    \item \label{m1} there exists a steady state distribution $\alpha = \alpha(S) \in R^{n}_{+}$ of this Markov process, i.e. $S \alpha = \alpha$ and $(\alpha,n 1_n)=1$;
    \item \label{m2} 1 is a unique modulo-maximal eigenvalue of $S$ and the corresponding eigenspace is one dimension;
\end{enumerate}
\end{proposition}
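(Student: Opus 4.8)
The plan is to get item~\ref{m0} directly from column stochasticity and to deduce items~\ref{m1} and~\ref{m2} from the Perron--Frobenius theorem for primitive matrices. For item~\ref{m0}, column stochasticity means every column of $S$ sums to $1$, i.e.\ $\sum_{i} S_{i,j} = 1$ for each $j$. Since $1_n = \frac1n \mathbf{1}$ with $\mathbf{1}$ the all-ones vector, the $j$-th coordinate of $S^t 1_n$ equals $\frac1n \sum_i S_{i,j} = \frac1n$, so $S^t 1_n = 1_n$. In particular $1$ is an eigenvalue of $S^t$, hence of $S$, so $\rho(S) \geq 1$; on the other hand $\rho(S) \leq \|S\|_1 = 1$ since the maximum column sum of $S$ is $1$. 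Thus $\rho(S) = 1$.

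Items~\ref{m1} and~\ref{m2} then follow by applying Perron--Frobenius to the primitive matrix $S$. Primitivity guarantees that the spectral radius $\rho(S) = 1$ is a simple eigenvalue and strictly dominates every other eigenvalue in modulus; simplicity gives the one-dimensional eigenspace and strict domination gives the uniqueness of the modulo-maximal eigenvalue, which is item~\ref{m2}. Perron--Frobenius further supplies a strictly positive eigenvector $\alpha \in \mathbb{R}^n_+$ with $S\alpha = \alpha$. Rescaling $\alpha$ so that $(\alpha, n 1_n) = \sum_i \alpha_i = 1$ turns it into a probability vector, which is exactly the steady-state distribution asserted in item~\ref{m1}.

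The only substantive ingredient is the Perron--Frobenius theorem, which I would cite rather than reprove; granting it, the argument is pure bookkeeping. The one place deserving care is the computation $\rho(S) = 1$: one must note both that $1$ is attained as an eigenvalue (from item~\ref{m0}) and that no eigenvalue can exceed $1$ in modulus, so that the Perron root is correctly identified with $1$ and the positive eigenvector of item~\ref{m1} genuinely corresponds to the steady state.
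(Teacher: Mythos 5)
Your proof is correct and follows essentially the same route as the paper: column stochasticity gives $S^t 1_n = 1_n$, and the Perron--Frobenius theorem for primitive matrices supplies the positive stationary vector $\alpha$ and the simplicity and strict dominance of the eigenvalue $1$. Your explicit verification that $\rho(S)=1$ via the bound $\rho(S) \le \|S\|_1 = 1$ is a small but welcome refinement, since the paper tacitly identifies the eigenvalue $1$ with the Perron root without justifying it.
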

\begin{proof}
    Since $S$ is a column stochastic matrix then $S^t 1_n = 1_n$. Thus $1$ is an eigenvalue of $S$ and corresponding eigenvector $[Q]$ is positive.
Since $S$ also is primitive then by Perron-Frobenius theorem $1$ is a unique modulo-maximal eigenvalue of $S$ and there is also
a unique (right) positive eigenvector $\alpha$, i.e. $S \alpha = \alpha$. Note that $\alpha$ can be chosen to be stochastic.
Also by Perron-Frobenius right and left eigenspaces corresponding to the eigenvalue $1$ are one dimension and equals to $<1_n>, <\alpha>$
respectively.
\end{proof}

For $K \subseteq Q$ and $v \in \Sigma^*$ we denote by $K.v$ and
$K.v^{-1}$ the image and respectively the preimage of the subset $K$
under the action of the word $v$, i.e.
$$K.v=\{q.v \mid q \in K\} \text{ and } K.v^{-1}=\{q \mid q.v \in K\}.$$
One can easily check that $[K.v] = [v][K], [K.v^{-1}]=[v^t][K]$ and $([K],1_n)=\frac{|K|}{n}$.
In order to simplify our notations we further omit square brackets.
Recall that a word $w$ is reset if and only if $q.w^{-1}$ for some state $q$ or equivalently $w^t q = [Q]$.
Let $P$ be any positive stochastic vector. Then $w$ is reset if and only if $([q.w^{-1}],P) = (w^t q, P)=1$.
It follows from $w^t q$ is a 1-0 vector and $P$ is positive.

Remark that one of the most fruitful method for finding quadratic upper bounds on the reset length is an \emph{extension} method.
In this method we choose some state $q$ and construct a finite sequence of words $w_1, w_2, \dots ,w_d$ such that
$$\frac{1}{n} = (q, 1_n) = ({w_1}^t q, 1_n) < ({w_2w_1}^t q, 1_n) < \dots < ({w_d \dots w_2w_1}^t q, 1_n)=1.$$
It is clear that such sequence can be constructed for any \san\ and its length $d$ is at most $n-1$ because
each inner product in the sequence exceeds previous for at least $\frac{1}{n}$.
Thus a quadratic upper bound will be proved as soon as one proves that the lengths of $w_i$ can be bounded by linear (in $n$) function.
For instance, if $|w_i| \leq n$ for $\mathrsfs{A}$ then it can be easily shown that the \v{C}ern\'{y} conjecture holds true for $\mathrsfs{A}$.
Using this fact the \v{C}ern\'{y} conjecture has been approved for \emph{circular}~\cite{Du98}, \emph{eulerian}~\cite{Ka03} and
\emph{one-cluster} automata with prime length cycle~\cite{SteinPrime10}. However, it is shown in~\cite{MyDLT2010} (see also the journal version~\cite{MyIJFCS10}) that there is a series of \sa\
where lengths of $w_i$ can not be bounded by $cn$ for any $1<c<2$. This means that for some proper subset $x \subset Q$ inequality $(v^t x, 1_n) \leq (x, 1_n)$
holds true for each word $v$ of length at most $cn$. Therefore the \v{C}ern\'{y} conjecture can not be always achieved on this way.
This suggests an idea to find a stochastic positive vector $P$ such that for each proper subset $x \subset Q$
there exists a word $v$ of length at most $n$ such that $(v^t x,P) > (x,P)$. It turns out that the vector $\alpha = \alpha(S(\mathrsfs{A},p))$
(the steady state distribution of Markov chain associated with $\mathrsfs{A}$ and probability vector $p$) satisfies this property.

\begin{theorem}\label{th_extension} Let $x \in R^n$ such that $(x,\alpha)=0$ and $v \in \Sigma^*$ be a word of minimum
length such that $(v^t x, \alpha) > 0$. Then
\begin{enumerate}
    \item\label{th_extension_0} $\sum_{u \in \Sigma^{r}}{p(u)(u^t x, \alpha)}=0$ for any $r \in \mathbb{N}$;
    \item\label{th_extension_1} if $|u| < |v|$ then $(u^t x, \alpha)=0$;
    \item\label{th_extension_2} $|v| \leq dim(\Sigma^{\leq n-1} \alpha)-1 \leq n-1$.
\end{enumerate}
\end{theorem}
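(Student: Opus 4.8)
The plan is to run everything through the adjoint identity $(u^t x,\alpha)=(x,u\alpha)$ together with the observation that averaging the letter‑matrices against $p$ reproduces powers of the Markov matrix. Concretely, from $S=\sum_{i=1}^k p(a_i)a_i$ one gets, by expanding the product, the identity $\sum_{u\in\Sigma^{r}}p(u)\,u=S^{r}$, and since $\alpha$ is a steady state, $S^{r}\alpha=\alpha$. Item~\ref{th_extension_0} is then a one‑line computation:
\[
\sum_{u\in\Sigma^{r}}p(u)(u^t x,\alpha)=\Bigl(x,\sum_{u\in\Sigma^{r}}p(u)\,u\alpha\Bigr)=(x,S^{r}\alpha)=(x,\alpha)=0,
\]
using only $S^{r}\alpha=\alpha$ and the hypothesis $(x,\alpha)=0$. (The case $r=0$ is just $(x,\alpha)=0$ directly.)

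For item~\ref{th_extension_1} I would argue from item~\ref{th_extension_0} by a sign argument. By minimality of $v$, every word $u$ with $|u|<|v|$ satisfies $(u^t x,\alpha)\le 0$. Fixing $r=|u|<|v|$ and applying item~\ref{th_extension_0}, the sum $\sum_{u'\in\Sigma^{r}}p(u')(u'^t x,\alpha)$ vanishes; but every summand is $\le 0$ and every weight $p(u')$ is strictly positive (as $p$ is a positive probability vector), and a vanishing sum of non‑positive terms with positive coefficients forces each term to be $0$. In particular $(u^t x,\alpha)=0$.

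The substance is item~\ref{th_extension_2}, which I would handle with an ascending chain of subspaces. Put $W_m=\mathrm{span}\{u\alpha : |u|\le m\}$, so that $W_{n-1}$ is the space written $\Sigma^{\le n-1}\alpha$ in the paper. Two facts drive the proof. First, $W_{m+1}=W_m+\sum_{i=1}^{k}a_iW_m$ (with $a_iW_m$ the image of $W_m$), so the instant the chain stalls, $W_m=W_{m+1}$, it is constant from then on; combined with $\dim W_m\le n$ this shows the chain stabilizes at some index $m^\ast\le n-1$, whence $W_{n-1}=W_{m^\ast}$ equals the full orbit span. Second, item~\ref{th_extension_1} gives $x\perp u\alpha$ for all $|u|\le |v|-1$, i.e. $x\perp W_{|v|-1}$, while $(x,v\alpha)=(v^t x,\alpha)>0$; hence $v\alpha\notin W_{|v|-1}$ and $W_{|v|-1}\subsetneq W_{|v|}$. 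By the ``stall‑forever'' property this strict inclusion forbids any equality earlier in the chain, so $W_0\subsetneq W_1\subsetneq\cdots\subsetneq W_{|v|}$ and therefore $\dim W_{|v|}\ge |v|+1$. Since $W_{|v|}\subseteq W_{n-1}$, I conclude $|v|+1\le\dim(\Sigma^{\le n-1}\alpha)\le n$, which is exactly the asserted chain of inequalities.

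The first two items are routine bookkeeping; the real content is item~\ref{th_extension_2}, and within it the only delicate point is making the ``stabilizes and then stays constant'' behaviour of $W_m$ precise, so that a single strict jump at level $|v|$ propagates backwards to force $\dim W_{|v|}\ge |v|+1$, together with the identification $W_{n-1}=W_{m^\ast}$. I expect no genuine obstacle beyond this. It is worth noting that the bound $|v|\le\dim(\Sigma^{\le n-1}\alpha)-1$ can be strictly sharper than $n-1$ precisely when the orbit span $W_{n-1}$ is a proper subspace of $\mathbb{R}^n$, which is the feature the extension method is designed to exploit in the Eulerian generalization.
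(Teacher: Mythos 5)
Your proof is correct and takes essentially the same route as the paper: items~1--2 via $S^r\alpha=\alpha$ (you merely make explicit the positivity-of-$p$ sign argument the paper calls immediate), and item~3 via the very same ascending chain of subspaces $\mathrm{span}\{u\alpha : |u|\le m\}$ with its stall-forever stabilization property --- your $W_m$ is the paper's $U_{m+1}$. The only difference is presentational: you run the dimension count directly (strict inclusions up to level $|v|$ force $\dim W_{|v|}\ge |v|+1$), whereas the paper packages the identical count as a proof by contradiction.
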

\begin{proof}
    Items~\ref{th_extension_0},\ref{th_extension_1} immediately follow from $S^r \alpha = \alpha$ for any $r \in \mathbb{N}$.
If \linebreak $|v| \geq dim(\Sigma^{\leq n-1} \alpha)$ then from item~\ref{th_extension_1} $(u^t x, \alpha)=(x, u\alpha)=0$ for every $u, |u| < dim(\Sigma^{\leq n-1} \alpha)$.
For $i \in \{1,2, \dots ,n\}$ define a subspace $U_i = < u \alpha \mid |u| \leq i-1 >$.
Then a chain $$<\alpha> = U_1 \leq U_2 \leq \dots \leq U_{n} = \Sigma^{\leq n-1} \alpha.$$
becomes constant since some $j \leq dim(U_{n}) \leq |v|$, i.e.
$$U_1 < U_2 < \dots < U_j = U_{j+1} = \dots = U_{n}.$$
Thus $(x, u \alpha) = 0$ for every $u, |u| \leq dim(U_n)=dim(U_j)$ whence $(x, g) = 0$ for each $g \in U_{dim(U_j)}$.
Since $j$ has been chosen minimal $dim(U_j) \geq j$ then $U_{j} \supseteq U_{dim(U_j)}$.
So $(x, g) = 0$ for each $g \in U_{j} = U_{j+1} = \dots $ and $j \leq dim(U_{n}) \leq |v|$.
Since $v \alpha \in U_{|v|}=U_{j}$ then $(x, v \alpha) = 0$ and this contradicts with $(v^t x, \alpha) > 0$.
\end{proof}

It is worth to mention that similar view to synchronization process as a probability process were early
studied by Jungers~\cite{Ju11} but in contrast of linear programming techniques by Jungers we use techniques
from Perron-Frobenius theory. Moreover, the main result of~\cite{Ju11} is a similar proposition like in above theorem.
But in contrast with Jungers result we have a fixed vector $\alpha$ here and thus obtain quadratic upper bound
for a new class of automata in the next section.

\section{Quasi Eulerian Automata}

In view of theorem~\ref{th_extension} the lengths of extension words (for $\alpha = \alpha(S(\mathrsfs{A},p))$ instead $1_n$)
are bounded by $n-1$. Unfortunately we have here a conjugate problem that the lengths of such
sequences is hard to bound in general, because if $(K_1, \alpha) < (K_2, \alpha)$ for 1-0 vectors $K_1,K_2$ then
its difference $(K_2-K_1, \alpha)$ can be less than $\frac{1}{n}$. However, for some classes of \sa\ we can
directly use this theorem. At first prove an auxiliary statement.
\begin{corollary}
\label{cor_max_denom}
    Let $\alpha = \alpha(S(\mathrsfs{A},p)) \in Q^n$ for some probability vector $p$ on $\Sigma$ and
$L \in N$ denotes the least common multiple of denominators of $\alpha$ components.
Then $\mathfrak{C}(\mathrsfs{A}) \leq 1 + (n-1)(L-2)$.
\end{corollary}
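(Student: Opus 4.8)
The plan is to run the extension method of Section~3 but with the stationary vector $\alpha$ in place of $1_n$, so that Theorem~\ref{th_extension} bounds the length of every extension word by $n-1$; the role of $L$ is then purely to control the \emph{number} of extension steps. The crucial observation is that every subset $K\subseteq Q$ has $\alpha$-measure $(K,\alpha)=\sum_{i\in K}\alpha_i$ lying in $\frac1L\mathbb{Z}$, since each $\alpha_i$ is an integer multiple of $\frac1L$. Hence along a strictly $\alpha$-increasing chain of preimages each inner product grows by at least $\frac1L$, and since these measures live in $\bigl[\frac1L,1\bigr]$ the number of steps will be linear in $L$.

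To feed a single step into Theorem~\ref{th_extension}, for a current set $K$ with $0<(K,\alpha)<1$ I would take $x=K-(K,\alpha)\,n1_n$. Then $(x,\alpha)=0$ because $(n1_n,\alpha)=1$, and since $v^t(n1_n)=n1_n$ one computes $(v^t x,\alpha)=(v^tK,\alpha)-(K,\alpha)=(K.v^{-1},\alpha)-(K,\alpha)$. Thus the minimal word $v$ with $(K.v^{-1},\alpha)>(K,\alpha)$ is exactly the minimal word with $(v^t x,\alpha)>0$, and Theorem~\ref{th_extension}(\ref{th_extension_2}) gives $|v|\le n-1$. Such a $v$ exists whenever $K\ne Q$: otherwise $(K.u^{-1},\alpha)=(K,\alpha)$ for every word $u$ by item~\ref{th_extension_0}, which a reset word $u$ to a state $\bar q$ contradicts, since its preimage of $K$ is $Q$ when $\bar q\in K$ and $\emptyset$ otherwise, giving measure $1$ or $0$ rather than $(K,\alpha)$.

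Running this from $K_0=\{q\}$ up to $Q$ already produces a reset word, but only of length about $(L-1)(n-1)$, so I would spend the very first letter wisely. Because $S\alpha=\alpha$, for every state $q$ the value $\alpha_q$ equals the convex combination $\sum_i p(a_i)\,\alpha(q.a_i^{-1})$, so unless $[a_i]\alpha=\alpha$ for all $i$ there is a pair $(q,a)$ with $\alpha(q.a^{-1})>\alpha_q$; and $[a_i]\alpha=\alpha$ for every letter is impossible for a synchronizing automaton with $n>1$, since applying a reset word would force $\alpha=e_{\bar q}$, contradicting the strict positivity of $\alpha$ guaranteed by Proposition~\ref{prop_markov}. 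Starting from such a $q$ and reading the single letter $a$ first moves the measure from $\alpha_q\ge\frac1L$ to $\alpha(q.a^{-1})\ge\frac2L$ (both being multiples of $\frac1L$). From measure $\ge\frac2L$ at most $L-2$ further steps, each of length $\le n-1$, reach measure $1$, i.e.\ the full preimage $Q$; the concatenated word is reset and has length at most $1+(n-1)(L-2)$.

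I expect the main obstacle to be exactly these two existence arguments: guaranteeing an extending word at every stage, and in particular producing the good first letter. Both reduce to the averaging identity $\sum_{u\in\Sigma^{r}}p(u)(u^t x,\alpha)=0$ from Theorem~\ref{th_extension}(\ref{th_extension_0}) combined with the action of a genuine reset word on $\alpha$; once these are secured, the remaining estimate is just the arithmetic of multiples of $\frac1L$.
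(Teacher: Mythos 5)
Your proof is correct and follows essentially the same route as the paper: a first letter lifts the $\alpha$-measure of the preimage to at least $\frac{2}{L}$, then Theorem~\ref{th_extension} supplies extension words of length at most $n-1$, each step gaining at least $\frac{1}{L}$, so at most $L-2$ further steps are needed, giving $1+(n-1)(L-2)$. Two local differences are both to your credit: you center with $x=K-(K,\alpha)\,n1_n$, which genuinely satisfies $(x,\alpha)=0$ (the paper writes $x_1=w_1^tq-|w_1^tq|1_n$, which does not in general), and you obtain the good first letter from the stationarity identity $S\alpha=\alpha$ together with positivity of $\alpha$, whereas the paper uses the simpler observation that a synchronizing automaton has a letter merging two states, so some state has a preimage of size at least two.
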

\begin{proof}
    At first note that if $x_1,x_2$ are 1-0 vectors and $(x_2, \alpha) > (x_1, \alpha)$
then $(x_2,\alpha) \geq (x_1,\alpha) + \frac{1}{L}$. Since $\mathrsfs{A}$ is synchronizing there exists
a state $q$ and a letter $a$ such that $|a^{-1}q|>1$.
Set $w_1 = a$ then $(w_1^t q, \alpha) \geq (q, \alpha) + \frac{1}{L} \geq \frac{2}{L}$.
Suppose $(w_1^t q, \alpha) < 1$. Let $x_1 = w_1^t q - |w_1^t q| 1_n$ and $w_2$ be a word of
minimum length with $({w_2}^t x_1, \alpha) > 0$.
Such word exists because $$(u^t x_1, \alpha) = (Q - |w_1^t q| 1_n, \alpha) = 1 - (w_1^t q, \alpha) > 0$$ for any \sw\ $u$.
In view of theorem~\ref{th_extension} $|w_2| \leq n-1$ and $(w_2^t w_1^t q,\alpha) \geq ( |w_1^t q| 1_n, \alpha) + \frac{1}{L} \geq \frac{3}{L}$.
Continue in this way we construct a \sw\ $w_d w_{d-1} \dots w_1$ where $|w_1|=1$ and $|w_2| \leq n$. Since we start from $\frac{2}{L}$ and each
step adds to inner product at least $\frac{1}{L}$ then $d \leq \frac{(1- \frac{2}{L})}{\frac{1}{L}} \leq L-2$.
Thus $\mathfrak{C}(\mathrsfs{A}) \leq 1 + (n-1) d \leq 1 + (n-1)(L-2)$ and the corollary is proved.
\end{proof}

An automaton is Eulerian if its underlying graph admits
an Eulerian directed path, or equivalently, it is strongly connected and
the in-degree of every vertex is the same as the out-degree (and hence is
the alphabet size). It is clear that $\mathrsfs{A}$ is Eulerian if and
only if $S(\mathrsfs{A},1_n)$ is doubly stochastic. Due to~\cite{Stein2010}
$\mathrsfs{A}$ is pseudo-Eulerian if we can find a probability $p$
such that $S(\mathrsfs{A},p)$ is doubly stochastic.

\begin{corollary}
    If $\mathrsfs{A}$ is Eulerian or pseudo-Eulerian then $\mathfrak{C}(\mathrsfs{A}) \leq 1 + (n-1)(n-2)$.
\end{corollary}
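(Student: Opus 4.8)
The plan is to deduce the corollary as a direct specialization of Corollary~\ref{cor_max_denom}, the whole difficulty being to pin down the steady state distribution $\alpha$ exactly. In both cases I would first fix a probability vector $p$ on $\Sigma$ for which $S(\mathrsfs{A},p)$ is doubly stochastic: in the Eulerian case the excerpt already notes that $S(\mathrsfs{A},1_n)$ is doubly stochastic, so one takes $p=1_n$; in the pseudo-Eulerian case such a $p$ exists by definition (see~\cite{Stein2010}).

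The key step is then to compute the steady state of a column stochastic matrix $S=S(\mathrsfs{A},p)$ that is additionally doubly stochastic. Because every row of $S$ sums to $1$, the uniform vector $1_n$ satisfies $S1_n=1_n$, and it is stochastic since $(1_n,n\,1_n)=1$. Hence $1_n$ is a positive right eigenvector of $S$ for the eigenvalue $1$. By Proposition~\ref{prop_markov}(\ref{m2}) this eigenspace is one-dimensional and the steady state distribution is unique, so necessarily $\alpha=\alpha(S(\mathrsfs{A},p))=1_n$.

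Finally I would feed this $\alpha$ into Corollary~\ref{cor_max_denom}. Every component of $\alpha$ equals $\frac1n$, which is already in lowest terms, so $\alpha\in Q^n$ and the least common multiple $L$ of the denominators of its components is exactly $n$. Substituting $L=n$ gives $\mathfrak{C}(\mathrsfs{A})\leq 1+(n-1)(L-2)=1+(n-1)(n-2)$, which is the desired bound.

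There is no genuinely hard step: once the reduction to Corollary~\ref{cor_max_denom} is in place the argument is purely mechanical. The only points deserving attention are the verification that double stochasticity forces the steady state to be uniform (rather than merely positive), and the minor bookkeeping that the denominator of $\frac1n$ is exactly $n$, so that $L=n$ and not some proper divisor --- immediate from $\gcd(1,n)=1$.
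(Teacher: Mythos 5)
Your proposal is correct and follows essentially the same route as the paper: choose $p$ making $S(\mathrsfs{A},p)$ doubly stochastic, conclude $\alpha=1_n$, and apply Corollary~\ref{cor_max_denom} with $L=n$. The paper states this in one line; you merely supply the (sound) details via Proposition~\ref{prop_markov} that double stochasticity together with uniqueness of the Perron eigenvector forces $\alpha=1_n$.
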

\begin{proof}
    By condition we can choose a probability vector $p$ on $\Sigma$ to
provide $S(\mathrsfs{A},p)$ is row stochastic. Then $\alpha = \alpha(S(\mathrsfs{A},p)) = 1_n$ and in view of corollary~\ref{cor_max_denom}
we obtain the desired result.
\end{proof}

Remark that the same bounds for Eulerian and
have been proved early by Kari~\cite{Ka03} and later generalized for pseudo-Eulerian
automata by Steinberg~\cite{Stein2010} using another techniques. However, we now show that techniques
suggested in this paper is more powerful in some sense.

\begin{proposition}
\label{prop_diff_values}
    Let $\alpha = \alpha(S(\mathrsfs{A},p))$ for some probability vector $p$ on $\Sigma$ and
for some $c > 0$ there are $n-c$ equal numbers in a set of $\alpha$ components. Then
 $\mathfrak{C}(\mathrsfs{A}) \leq 2^c (n-c+1)(n-1)$.
\end{proposition}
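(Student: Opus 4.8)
The plan is to run the extension argument of Corollary~\ref{cor_max_denom} essentially verbatim, replacing only its counting step: instead of bounding the number of extensions by the least common denominator $L$, I would bound it by the number of \emph{distinct values} that the inner product $(K,\alpha)$ can take as $K$ ranges over subsets of $Q$. The whole point is that, although $\alpha$ need not be a nice fraction vector, its having $n-c$ equal entries severely restricts how many values these inner products can realize.

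First I would split $Q$ into the $n-c$ states whose $\alpha$-component equals the common value $\beta$ and the remaining $c$ \emph{special} states. For any $K\subseteq Q$ the weight $(K,\alpha)$ equals $m\beta+\sigma$, where $m\in\{0,1,\dots,n-c\}$ is the number of common states in $K$ and $\sigma$ is the sum of the special components lying in $K$. Since $\sigma$ depends only on which of the $c$ special states belong to $K$, it takes at most $2^c$ values; hence $(K,\alpha)$ takes at most $(n-c+1)2^c$ distinct values over all $K$.

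Next I would build the extension sequence exactly as in Corollary~\ref{cor_max_denom}. Pick a state $q$ and a letter $a$ with $|a^{-1}q|>1$, set $w_1=a$, and let $K_1=w_1^t q$. As long as the current preimage set $K$ has $(K,\alpha)<1$, put $x=K-(K,\alpha)Q$; then $(x,\alpha)=0$ because $(Q,\alpha)=1$, and since $v^t Q=Q$ for every $v$ we get $(v^t x,\alpha)=(v^t K,\alpha)-(K,\alpha)$. A reset word synchronizing into a state of $K$ drives this quantity to $1-(K,\alpha)>0$, so a word with $(v^t x,\alpha)>0$ exists; choosing $v$ of minimum length, Theorem~\ref{th_extension} yields $|v|\le n-1$, and replacing $K$ by $v^t K$ strictly increases the weight. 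Iterating produces a reset word $w_d\cdots w_1$ whose intermediate weights $(K_1,\alpha)<(K_2,\alpha)<\dots<1$ are pairwise distinct.

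Finally I would count. These strictly increasing weights are distinct values among the at most $(n-c+1)2^c$ possible ones, so at most $(n-c+1)2^c$ of them occur; thus the number of extension words (each of length at most $n-1$, together with the initial letter $w_1$ of length $1$) is bounded accordingly, and summing the lengths gives $\mathfrak{C}(\mathrsfs{A})\le 2^c(n-c+1)(n-1)$. I expect the main obstacle to be purely bookkeeping: verifying the orthogonality $(x,\alpha)=0$ at every step so that Theorem~\ref{th_extension} genuinely applies, and pinning down the off-by-one constants so the estimate lands exactly on $2^c(n-c+1)(n-1)$. The factor $2^c$ is essentially forced, since the $c$ special components are uncontrolled and every subset of them may contribute a fresh value of $(K,\alpha)$.
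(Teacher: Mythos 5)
Your proposal is correct and follows essentially the same route as the paper: the decisive step in both is that $(K,\alpha)$ is determined by which of the $c$ special states lie in $K$ together with the count of common-value states, giving at most $2^c(n-c+1)$ distinct weights, so the strictly increasing extension chain has at most that many steps, each of length at most $n-1$ by Theorem~\ref{th_extension}. Your explicit centering $x = K-(K,\alpha)Q$ (which makes $(x,\alpha)=0$ since $(Q,\alpha)=1$) is in fact a cleaner way to justify invoking the theorem than the formula written in the proof of Corollary~\ref{cor_max_denom}, but the argument is the same one.
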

\begin{proof}
    Without loss of generality let $\alpha = (r_1, r_2, \dots , r_c, r, r, \dots ,r)^t$ and $K \subset Q$.
Let $f_i$ determine that $K_i = 1$ for $i \in \{1,2, \dots ,c\}$ and $f_{c_+}$ be a number of 1's in $K$
with index more than $c$. Then $(K,\alpha) = \sum_{i=1}^{c}f_i r_i + f_{c_+}r$ whence this value is determined
by a vector $f(K)=(f_1,f_2, \dots, f_c, f_{c_+})$ where $f_{c_+} \in \{0 \dots n-c\}$ and $f_i \in \{0,1\}$.
Hence there are at most $2^c (n-c+1)$ possible different values of $(K,\alpha)$ and the length of any extension chain (for $\alpha$)
can not exceed $2^c (n-c+1)$. In view of theorem~\ref{th_extension} we can choose words of length at most $n-1$ and thus
we obtain a desired bound.
\end{proof}

As a corollary of this proposition we can prove a quadratic upper bound on the reset length for a new class of \sa.
We call automaton $\mathrsfs{A}$ \emph{quasi-Eulerian} with respect to $c \in N$ if there is an Eulerian or pseudo-Eulerian ``component''
$E_c$ with enter state $s$ which contains $n-c$ states, i.e. only state $s$ can have incoming arrows from $Q \setminus E_c$
and rows of $S(\mathrsfs{A},p)$ which corresponds to vertices from $E_c - s$ are row stochastic for some $p$.

\begin{theorem}
\label{th_qeuler}
    if $\mathfrak{C}(\mathrsfs{A})$ is quasi-Eulerian with respect to $c \in N$ then $\mathfrak{C}(\mathrsfs{A}) \leq 2^c (n-c+1)(n-1)$.
\end{theorem}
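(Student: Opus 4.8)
The plan is to deduce the bound directly from Proposition~\ref{prop_diff_values}. That proposition already yields the estimate $2^c(n-c+1)(n-1)$ as soon as the steady state distribution $\alpha=\alpha(S(\mathrsfs{A},p))$ has $n-c$ equal components, and by definition the Eulerian component $E_c$ consists of exactly $n-c$ states. So I would fix the probability vector $p$ witnessing the quasi-Eulerian property, form $\alpha=\alpha(S(\mathrsfs{A},p))$ (unique and positive by Proposition~\ref{prop_markov}), and reduce the whole theorem to the single claim that $\alpha$ is constant on $E_c$.

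First I would write $S\alpha=\alpha$ coordinatewise, recalling that $S=S(\mathrsfs{A},p)$ is column stochastic and that $S_{j,k}>0$ exactly when there is an arrow $k\to j$. For $j\in E_c\setminus\{s\}$ the two defining properties of a quasi-Eulerian automaton combine nicely: only $s$ receives arrows from outside $E_c$, so $S_{j,k}=0$ for $k\notin E_c$; and the $j$-th row of $S$ is stochastic, so $\sum_{k\in E_c}S_{j,k}=1$. Hence for every such $j$ the identity $\alpha_j=\sum_{k\in E_c}S_{j,k}\alpha_k$ exhibits $\alpha_j$ as a genuine weighted average of the values of $\alpha$ on $E_c$. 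This averaging identity is what the argument rests on.

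I would then run a maximum principle on $E_c$. Let $m=\max_{E_c}\alpha$ and $T=\{j\in E_c:\alpha_j=m\}$. The averaging identity forces, for every $j\in T\setminus\{s\}$, that all in-neighbours of $j$ already lie in $T$ (equality in the averaging inequality). Using that the Eulerian/pseudo-Eulerian component $E_c$ is strongly connected, if $T$ were a proper subset there would be an internal arrow from $E_c\setminus T$ into $T$; by the closure property its head cannot be any $j\ne s$, so the head is $s$ and thus $s\in T$. Running the identical argument for the minimum set $T'=\{j\in E_c:\alpha_j=\min_{E_c}\alpha\}$ gives $s\in T'$. If $\alpha$ were non-constant on $E_c$, both $T$ and $T'$ would be proper, forcing $\max_{E_c}\alpha=\alpha_s=\min_{E_c}\alpha$, a contradiction; hence $\alpha$ is constant on $E_c$ and Proposition~\ref{prop_diff_values} finishes the proof.

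The main obstacle is precisely the entrance state $s$: it is the one vertex of $E_c$ where the averaging identity fails, since $s$ carries the external inflow (and possible outflow) that breaks row stochasticity, so the backward propagation of an extremal value stalls exactly there. The device that removes this obstacle is to propagate the extreme value from the maximum and the minimum at the same time: each proper extremal set is forced through the single exceptional vertex $s$, and two distinct extreme values cannot both be attained at $s$. I would carefully verify the two facts this argument silently uses — that the induced subgraph on $E_c$ is strongly connected, and that the entrance/row-stochasticity conditions are applied with the correct column-action orientation of $S$ — since a transpose or orientation slip there would destroy the averaging identity.
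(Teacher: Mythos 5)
Your overall route is exactly the paper's: fix the probability vector $p$ from the definition, take the stationary vector $\alpha=\alpha(S(\mathrsfs{A},p))$ (Proposition~\ref{prop_markov}), show $\alpha$ is constant on $E_c$, and feed the resulting $n-c$ equal components into Proposition~\ref{prop_diff_values}. The paper's own proof dismisses the constancy claim with ``it is easy to show,'' and your averaging identity $\alpha_j=\sum_{k\in E_c}S_{j,k}\alpha_k$ for $j\in E_c\setminus\{s\}$ (correctly derived from the entrance condition plus row-stochasticity, with the right column-action orientation of $S$) together with a maximum principle is the right way to fill that in. However, the lemma you rest the maximum principle on --- strong connectivity of the subgraph induced on $E_c$ --- is not part of the paper's definition and is in fact false, even for the motivating example of the theorem. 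In the \v{C}ern\'y automaton $\mathrsfs{C}_n$, which is quasi-Eulerian with $c=1$, one has $E_c=\{1,\dots,n-1\}$ and $s=1$: inside $E_c$ the only non-loop arrows form the directed path $1\to 2\to\cdots\to n-1$, and the only way back from $n-1$ to $1$ passes through the external state $n$. So there need not exist any ``internal arrow from $E_c\setminus T$ into $T$,'' and your derivation of $s\in T$ breaks down precisely in the case the theorem was designed to cover.

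The gap is local and repairable, because what the argument really needs is strong connectivity of the \emph{whole} automaton, which the paper assumes throughout. Keep your averaging identity, and let $T$ be the set of maximizers of $\alpha$ over $E_c$. Suppose $s\notin T$. Then every $j\in T$ lies in $E_c\setminus\{s\}$, so by the entrance condition all in-neighbours of $j$ lie in $E_c$, and by the averaging identity they lie in $T$; that is, $T$ is closed under taking in-neighbours. Since $T\neq\emptyset$ and $T\neq Q$ (as $s\notin T$), strong connectivity of $\mathrsfs{A}$ yields an arrow from $Q\setminus T$ into $T$, contradicting this closure. Hence $s\in T$, and the identical argument for the set $T'$ of minimizers gives $s\in T'$, so $\max_{E_c}\alpha=\alpha_s=\min_{E_c}\alpha$ and $\alpha$ is constant on $E_c$. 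With this one substitution --- strong connectivity of $\mathrsfs{A}$ in place of strong connectivity of $E_c$ --- your proof is complete, and it supplies exactly the step the paper leaves unproved.
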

\begin{proof}
By condition for appropriate probability vector $p$ on $\Sigma$ we can provide that
rows of matrix $S=S(\mathrsfs{A},p)$ corresponding to states in $E_c - s$ are stochastic.
In view of theorem~\ref{th_extension} $\alpha$ is a single positive solution of equation $(S-E)x = 0$.
It is easy to show that all entries of $\alpha$ which corresponds to states from $E_c$ will have the same value
whence we can apply proposition~\ref{prop_diff_values} to $\alpha, c$ and obtain the desired result.
\end{proof}

As an example of quasi-Eulerian we can consider automata $\mathrsfs{C}_n$ from \v{C}ern\'{y} series.
One can easily check that $\mathrsfs{C}_n$ is quasi-Eulerian for $c=1$ and thus upper bound $\mathfrak{C}(\mathrsfs{C}_n) \leq 2n(n-1)$
follows from theorem~\ref{th_qeuler}. Finally, let us express our hope that ideas suggested in this paper could
be useful for the general case.


\begin{thebibliography}{99}

\bibitem{VAG}
Ananichev, D., Gusev, V., Volkov, M.: Slowly Synchronizing Automata and Digraphs,
In proc. of Mathematical Foundations of Computer Science 2010,
Lect. Notes in Comp. Sci, v. 6281, pp. 55--65 (2010)

\bibitem{MyDLT2010}
Berlinkov, M.: On a conjecture by Carpi and D'Alessandro,
14th Internaional Conference ``Developments in Language
Theory''. Lecture Notes in Computer Science. V. 6224. pp.
{66--75} (2010)

\bibitem{MyIJFCS10}
Berlinkov, M: On a conjecture by Carpi and D'Alessandro,
International Journal of Foundations of Computer Science. V.
22. No. 7. pp. 1565--1576 (2011)

\bibitem{Ce64}
\v{C}ern\'{y}, J.: Pozn\'{a}mka k homog\'{e}nnym eksperimentom s
kone\v{c}n\'{y}mi automatami. Matematicko-fyzikalny \v{C}asopis
Slovensk.\ Akad.\ Vied 14(3) 208--216 (1964) (in Slovak)

\bibitem{Du98}
Dubuc, L.: Sur les automates circulaires et la conjecture de
\v{C}ern\'y. RAIRO Inform.\ Th\'eor.\ Appl. 32, 21--34 (1998) (in
French)

\bibitem{Fr82}
Frankl, P: An extremal problem for two families of sets, Eur.\ J.
Comb. 3, 125--127 (1982)

\bibitem{Ju11}
Jungers, M.: The Synchronizing Probability Function of an Automaton,
SIAM J. Discrete Math. 26, pp. 177--192 (2011)

\bibitem{Ka03}
Kari, J: Synchronizing finite automata on Eulerian digraphs,
Theoret.\ Comput.\ Sci. 295, 223--232 (2003)

\bibitem{Pi78}
Pin, J.-E.: Le probl\`eme de la synchronization et la conjecture de
Cerny, Th\`ese de 3\`eme cycle. Universit\'e de Paris 6 (1978)

\bibitem{Pi83}
Pin, J.-E.: On two combinatorial problems arising from automata
theory. Ann.\ Discrete Math. 17, 535--548 (1983)

\bibitem{SteinPrime10}
Steinberg, B.: The \v{C}ern\'{y} conjecture for one-cluster automata with prime length cycle,
Theoret. Comput., Sci., 412(39), pp. 5487--5491 (2011)

\bibitem{Stein2010}
Steinberg, B.: The averaging trick and the \v{C}ern\'{y} conjecture
Trans. Amer. Math. Soc. 361, 1429–-1461 (2009)

\bibitem{TR_7_48}
Trahtman A.: Modifying the Upper Bound on the Length of Minimal
Synchronizing Word. Lect. Notes in Comp. Sci, v. 6914 Springer,
173--180 (2011)

\bibitem{Vo08}
Volkov, M.: Synchronizing automata and the \v{C}ern\'{y}
conjecture. In: Mart\'\i{}n-Vide, C.; Otto, F.; Fernau, H. (eds.)
Languages and Automata: Theory and Applications. Lect.\ Notes
Comput.\ Sci., v. 5196, pp. 11--27. Springer, Heidelberg (2008)

\bibitem{WIL}
Wielandt, H.: Unzerlegbare, nicht negative Matrizen. Math. Z. 52, 642–648 (1950)
(in German)

\end{thebibliography}
\end{document}